\newtheorem{definition}{Definition}[section]
\newtheorem{theorem}{Theorem}[section]
\newtheorem{lemma}{Lemma}[section]
\newtheorem{remark}{Remark}[section]
\begin{document}

\title{{Mean-variance-utility portfolio selection with time and state dependent risk aversion}}
\author{{Ben-Zhang Yang$^a$, Xin-Jiang He$^b$, and Song-Ping Zhu$^b$\thanks{Corresponding author. E-mail address: spz@uow.edu.au}}\\
{\small\it a. School of Mathematics, Sichuan University, Chengdu 610064, P.R. China}\\
{\small\it b. School of Mathematics and Applied Statistics, University of Wollongong, NSW 2522, Australia}
}
\date{}
\maketitle
\vspace*{-9mm}
\begin{center}
\begin{minipage}{5.5in}
{\bf Abstract.} Under mean-variance-utility framework, we propose a new portfolio selection model, which allows wealth and time both have influences on risk aversion in the process of investment. We solved the model under a game theoretic framework and analytically derived the equilibrium investment (consumption) policy.  The results conform with the facts that optimal investment strategy heavily depends on the investor's wealth and future income-consumption balance as well as the continuous optimally consumption process is highly dependent on the consumption preference of the investor.
\\ \ \\
{\bf Keywords:} Mean-variance portfolio problem; Utility; Optimal investment and consumption; Equilibrium; State dependent risk aversion;
\\ \ \\

\end{minipage}
\end{center}
\section{Introduction}

Since Markowitz's pioneering work on a static portfolio selection model \cite{Markowitz52}, mean-variance problem has become one of the most important tools in finance to achieve a balance between uncertain returns and risks. Under mean-variance framework, there are several models have been proposed and developed to address investment problems, which have attracted a lot of attention from both academic researchers and market practitioners \cite{Ma,Yang19,Yang20a,Yang20b}.

Within the complete market setting, various pre-commitment (or time-inconsistent) results have been presented for the variance-minimizing policy using martingale methods, given that the expected terminal wealth is equal to a certain level (see \cite{Besnainou98,Bielecki05,Cvitanic08,Cvitanic04,Zhao02}). In an incomplete market, Cochrane \cite{Cochrane08} derived the optimal investment policy that minimizes the "long-term" variance of portfolio returns subject to the constraint that the long-term mean of portfolio returns equals to a pre-specified target level. This approach has also been applied in futures trading strategies by Duffie and Richardson \cite{Duffie91} through setting a mean-variance objective at the initial date. They obtained a pre-commitment solution, which also solves the optimal problem with a quadratic objective for some specific parameters. A similar approach developed for continuous time complete-market settings has also been widely discussed in the literature \cite{Brandt09,Leippold04,Li00,Lim02,Zhou00}.

However, Basak and Chabakauri \cite{Basak10} challenged the pre-commitment assumption \cite{Zhou00}, and assumed investors are sophisticated in the sense that they will maximize their mean-variance objective over time considering all future updates, instead of finding an optimal solution at a fixed given time moment. Following this, Kryger and Steffensen \cite{Kryger10} worked under the Black-Scholes framework without the pre-commitment assumption, and showed that the optimal strategy derived for a mean-standard deviation investor is to take no risk at all. Bj\"{o}rk et al. \cite{Bjork12} further considered  mean-variance optimization problems under a game theoretic framework, and the optimal strategies were derived in the context of sub-game perfect Nash equilibrium.

Recently, researchers started to incorporate consumption choices into the mean-variance problem, investigating the optimal investment-consumption problem together with the mean-variance criterion. For example, Kronborg and Steffensen \cite{Kronborg15} directly added the accumulated consumption to the terminal wealth to formulate an ``adjusted" terminal wealth, and tried to maximize the adjusted terminal wealth over time under the mean-variance framework. Christiansen and Steffensen \cite{Christiansen13} further considered the same optimization problem with deterministic consumption and investment to avoid a series of difficulties. Unfortunately, the optimal consumption strategy derived under this particular model has caused a probably absurd conclusion that investor could suddenly be required to switch his/her optimal consumption strategy from consuming as much as possible to as little as possible. To obtain a rational consumption policy, Yang et. al \cite{Yang20} proposed a new portfolio selection model which simultaneously maximize the terminal wealth and accumulated consumption utility subject to a mean variance criterion controlling the final risk of the portfolio.  The analytically derived policy performs the continuous influence of investors' consumption preference on the optimal consumption strategy and represents a more economically rational investment/consumption behavior.

Unfortunately, the optimal amount of moment to invest in is not dependent of wealth under the model setting \cite{Yang20}, which means that for a given risk aversion degree a rich or poor investor optimally invest the same amount of the money in stocks. In fact, the investor will change his investment policy according to the update of her/his wealth in the case of multi-stage investment.
Inspired this, we propose a new portfolio selection model which allows the risk aversion depend on present wealth and time, therefore the progression of wealth and time phasing can have impacts on the varying risk aversion. The newly formulated optimization problem still preserves the analytical tractability under a continuous-time game theoretic framework, and the analytical optimal continuous investment and consumption strategies derived in the sense of equilibrium \cite{Bjork09,Bjork12} admit intuitive economic explanation.

The rest of this paper is organized as follows. Section 2 proposes the new portfolio selection problem. In Section 3, we analytically derive the optimal strategies based on the definition of the equilibrium strategy. Some concluding remarks are given in the last section.

\section{The portfolio selection problem}

\subsection{The model}

We now assume that we work under the standard Black-Scholes market, where an investor has access to a risk-free bank account and a stock whose dynamics can be specified as
\begin{equation}
\begin{aligned}
dM(t)&=rM(t)dt, &&\quad M(0)=1,\\
dS(t)&=\mu S(t)dt+\sigma S(t) dB(t), && \quad S(0)=s_0>0.
\end{aligned}
\end{equation}
Here, $r>0$, $\mu$ and $\sigma$ are constants, and it is assumed that $\mu>r$. The process $B(t)$ is a standard Brownian motion on the probability space $(\Omega, \mathcal{F},\mathbb{P})$ with the filtration $\sigma\{B(s);0\leq s \leq t\}$, $\forall t \in [0,T]$.

Let $L^2_{\mathcal{F}}(0,T;R)$ denote the set of all $R$-valued, measurable stochastic process $f(t)$ adapted to $\{F_t\}_{t\geq 0}$ such that $E\left[\int_{0}^{T}f^2(t)dt\right]<\infty$.
We also assume that the investor in this market needs to make investment decisions on a finite time horizon $[0,T]$, and he/she allocates a proportion $\pi(t)$ and $1-\pi(t)$ of his wealth into the stock and bank account, respectively, at time $t$. Let $X^{\pi}(t)$ be the wealth of the investor at time $t$ following the investment strategy $\pi(\cdot)$ with an initial wealth of $x_0$ at time $0$. We assume that the investor possesses a continuous deterministic income rate $l(t)$, and chooses a non-negative consumption rate $c(t)$. Under these assumptions, the dynamic of the investor's wealth can be derived as
\begin{equation}\label{wealth}
\left\{
\begin{aligned}
dX^{c,\pi}(t)&=[(r+\pi(t)(\mu-r))X^{c,\pi}(t)+l(t)-c(t)]dt+\pi(t)\sigma X^{c,\pi}(t)dB(t), \quad t \in [0,T),\\
X(0)&=x_0>0.
\end{aligned}
\right.
\end{equation}

In this paper, by introducing a time and sate dependent risk aversion function \cite{Bjork12,Kronborg15}, we propose a general portfolio selection model: at a given time $t$, the investor attempt to achieve the following objective
\begin{equation}\label{maxu2}
\begin{aligned}
&\max_{c(\cdot),\pi(\cdot)} && E\left(e^{-\delta (T-t)}X(T)\right)-\frac{\gamma}{2(x+K^{(c)}(t,x))}Var\left(e^{-\delta (T-t)}X(T)\right)+\beta E\left(\int_{t}^{T}e^{-\rho(s-t)}U(c(s))ds\right)\\
&s.t. && \left\{
\begin{aligned}
&c(\cdot), \pi(\cdot) \in L^2_{\mathcal{F}}(0,T;R),\\
&(X(\cdot),c(\cdot),\pi(\cdot)) ~satisfy ~Equation ~ \eqref{wealth},
\end{aligned}
\right.
\end{aligned}
\end{equation}
where
\begin{equation}
K^{(c)}(t,x):=E\left[\left.\int_{t}^{T}e^{-r(s-t)}\left(l(s)-c(s,X^{c,\pi}(s))\right)ds\right|X(t)=x\right]
\end{equation}
is the time-$t$ financial value of future labor income net of consumption, where $\gamma>0$ is the risk-aversion parameter and $\beta>0$ is the preference coefficient of total utility of consumption.

Obviously, the model \eqref{maxu2} involves the optimization of the mutual objective of expected return, risk and consumption utility. It is necessary for the investor to consider selecting a set of appropriate investment and consumption strategies to achieve the goal of both maximizing return and minimizing risk as well as maximizing consumption utility. It should be highlighted that, except for being relevant to the original constant risk-aversion coefficient $\gamma$, the preference of risk tolerance is also dependent with the time and the investor's  wealth. If the investor's future income and expenditure is relatively excellent, he will reduce the corresponding risk aversion degree, and will tend to invest more in the stock to obtain the potential outcomes.

We would like to point out that  the model \eqref{maxu2} includes several known models as special cases. In fact, if we remove the consumption component, the model degenerates into the one studied in \cite{Christiansen13,Kronborg15}; if we do not consider the time and sate dependent risk aversion function, the model becomes the one reported in \cite{Yang20}; if $\delta$ is set to 0, the time and sate dependent risk aversion and the consumption component are not be taken into account, then the model becomes the classical mean-variance model (see \cite{Basak10,Li00,Zhou00}).

\subsection{Equilibrium strategy}

We shall solve the optimal portfolio selection problem \eqref{maxu2} under a game theoretic framework, which was introduced in \cite{Bjork09,Bjork12} and developed by \cite{Kronborg15,Yang20}. The equilibrium strategy under the continuous-time game theoretic equilibrium for the problem \eqref{maxu2} can be defined as follows.
\begin{definition}\label{def1}
Consider a strategy $(c^*,\pi^*)$ and a fixed point $(c,\pi)$. For a fixed number $h>0$ and an initial point $(t,x)$, we define the strategy $(\widetilde{c}_h,\widetilde{\pi}_h)$ as
\begin{equation}\label{pi0}
(\widetilde{c}_h(s),\widetilde{\pi}_h(s))=\left\{
\begin{aligned}
&(c,\pi), && \text{for} \quad  t\leq s< t+h,\\
&(c^*(s),\pi^*(s)), &&\text{for} \quad t+h \leq s <T.
\end{aligned}
\right.
\end{equation}
If
\begin{equation}
\lim_{h\rightarrow 0} \inf \frac{1}{h}\left(f^{c^*,\pi^*}(t,x,y^{c^*,\pi^*},z^{c^*,\pi^*},w^{c^*,\pi^*})-f^{\widetilde{c}_h,\widetilde{\pi}_h}(t,x,y^{\widetilde{c}_h,\widetilde{\pi}_h},z^{\widetilde{c}_h,\widetilde{\pi}_h},w^{\widetilde{c}_h,\widetilde{\pi}_h})\right)\geq 0
\end{equation}
for all $(c,\pi) \in \mathbb{R}_+ \times \mathbb{R}$, where $f$ is  an optimal value function and
\begin{equation}\label{yzw0}
\begin{aligned}
&y^{c,\pi}:=y^{c,\pi}(t,x)=E\left[\left.e^{-\delta(T-t)}X^{c,\pi}(T)\right|X(t)=x\right],\\
&z^{c,\pi}:=z^{c,\pi}(t,x)=E\left[\left.\left(e^{-\delta(T-t)}X^{c,\pi}(T)\right)^2\right|X(t)=x\right],\\
&w^{c,\pi}:=w^{c,\pi}(t,x)=E\left[\left.\int_{t}^{T}e^{-\rho(s-t)}U(c(s))ds\right|X(t)=x\right],\\
\end{aligned}
\end{equation}
then $(c^*,\pi^*)$ is an equilibrium strategy.
\end{definition}

If we denote $(c^*,\pi^*)$ as the equilibrium strategy satisfying Definition \ref{def1}, and let $V$ be the  the corresponding value function with the equilibrium strategy, we can obtain
\begin{equation}\label{op1}
V(t,x)=f^{c,\pi}(t,x,y^{c^*,\pi^*},z^{c^*,\pi^*},w^{c^*,\pi^*}).
\end{equation}
Clearly, our problem is to search for the corresponding optimal strategies and the optimal value function $f:[0,T]\times \mathbb{R}^4\rightarrow R$ as a $\mathcal{C}^{1,2,2,2,2}$ function of the form
\begin{equation}\label{ff}
f^{c^*,\pi^*}(t,x,y^{c,\pi},z^{c,\pi},w^{c,\pi})=y-\frac{\psi(t,x)}{2}(z-y^2)+\beta w, \quad (c,\pi)\in \mathcal{A},
\end{equation}
where $\psi(t,x)=\frac{\gamma}{x+K^{(c)}(t,x)}$ and $\mathcal{A}$ is the class of admissible strategies.

Before we are able to present the optimal solution, some preliminaries need to be outlined. As reported in studies \cite{Kronborg15,Yang20}, we can  establish an extension of the HJB equation for the characterization of the optimal value function and the corresponding optimal strategy, so that the stochastic problem can be transformed into a system of deterministic differential equations and a deterministic point-wise minimization problem. We introduce the following two lemmas. Due to the length limitation, we are not prepared to prove the following lemmas and recommend interested readers to refer to the literature \cite{Yang20}.
\begin{lemma}
Suppose there exist three functions $Y=Y(t,x)$, $Z=Z(t,x)$ and $W=W(t,x)$ such that
\begin{equation}\label{Y1}
\left\{
\begin{aligned}
Y_t(t,x)&=-[(r+\pi(\mu-r))x+l-c]Y_x(t,x)-\frac{1}{2}\pi^2\sigma^2x^2Y_{xx}(t,x)+\delta Y(t,x),\\
Y(T,x)&=x,
\end{aligned}
\right.
\end{equation}
\begin{equation}\label{Z1}
\left\{
\begin{aligned}
Z_t(t,x)&=-[(r+\pi(\mu-r))x+l-c]Z_x(t,x)-\frac{1}{2}\pi^2\sigma^2x^2Z_{xx}(t,x)+2\delta Z(t,x),\\
Z(T,x)&=x^2,
\end{aligned}
\right.
\end{equation}
and
\begin{equation}\label{W1}
\left\{
\begin{aligned}
W_t(t,x)&=-[(r+\pi(\mu-r))x+l-c]W_x(t,x)-\frac{1}{2}\pi^2\sigma^2x^2W_{xx}(t,x)-e^{-\rho t}U(c),\\
W(T,x)&=0,
\end{aligned}
\right.
\end{equation}
where $(c,\pi)$ is an arbitrary admissible strategy. Then,
\begin{equation}\label{s1}
Y(t,x)=y^{c,\pi}(t,x),\quad Z(t,x)=z^{c,\pi}(t,x), \quad W(t,x)=w^{c,\pi}(t,x),
\end{equation}
where $y^{c,\pi}$, $z^{c,\pi}$  and $w^{c,\pi}$ are given by  \eqref{yzw0}.
\end{lemma}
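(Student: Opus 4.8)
The statement is a Feynman--Kac verification, so the plan is to fix an admissible pair $(c,\pi)$ and an initial point $(t,x)$, run the wealth process $X(\cdot):=X^{c,\pi}(\cdot)$ of \eqref{wealth} from $X(t)=x$, and recognise \eqref{Y1}--\eqref{W1} as linear parabolic equations governed by the infinitesimal generator
\begin{equation}
\mathcal{L}^{c,\pi}\phi(t,x)=[(r+\pi(\mu-r))x+l-c]\,\phi_x(t,x)+\frac{1}{2}\pi^2\sigma^2x^2\,\phi_{xx}(t,x)
\end{equation}
of $X$. Indeed, \eqref{Y1} reads $Y_t+\mathcal{L}^{c,\pi}Y-\delta Y=0$ with $Y(T,x)=x$; \eqref{Z1} reads $Z_t+\mathcal{L}^{c,\pi}Z-2\delta Z=0$ with $Z(T,x)=x^2$; and \eqref{W1} reads $W_t+\mathcal{L}^{c,\pi}W+e^{-\rho t}U(c)=0$ with $W(T,x)=0$. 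I would assume, as is implicit in the statement, that $Y,Z,W\in\mathcal{C}^{1,2}([0,T]\times\mathbb{R})$ and that these functions together with their $x$-derivatives satisfy a polynomial growth bound in $x$.

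First I would treat $Y$. Applying It\^o's formula to $s\mapsto e^{-\delta(s-t)}Y(s,X(s))$ on $[t,T]$ and substituting \eqref{Y1}, every $ds$-term cancels, leaving
\begin{equation}
e^{-\delta(s-t)}Y(s,X(s))=Y(t,x)+\int_t^s e^{-\delta(u-t)}\pi(u)\sigma X(u)\,Y_x(u,X(u))\,dB(u).
\end{equation}
The remainder is \emph{a priori} only a local martingale; after arguing (see the last paragraph) that it is a genuine martingale, I would take $E[\,\cdot\mid\mathcal{F}_t]$ at $s=T$ and use $Y(T,X(T))=X(T)$ to obtain $Y(t,x)=E[e^{-\delta(T-t)}X(T)\mid X(t)=x]=y^{c,\pi}(t,x)$.

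The cases of $Z$ and $W$ follow the same template. For $Z$, apply It\^o to $s\mapsto e^{-2\delta(s-t)}Z(s,X(s))$; the drift vanishes by \eqref{Z1}, and the terminal datum $Z(T,X(T))=X(T)^2$ gives $Z(t,x)=E[(e^{-\delta(T-t)}X(T))^2\mid X(t)=x]=z^{c,\pi}(t,x)$. For $W$, apply It\^o to $s\mapsto W(s,X(s))$; by \eqref{W1} its drift equals $-e^{-\rho s}U(c(s))$, so $W(s,X(s))+\int_t^s e^{-\rho u}U(c(u))\,du$ is a (local) martingale, and $W(T,\cdot)=0$ yields $W(t,x)=E\big[\int_t^T e^{-\rho s}U(c(s))\,ds\mid X(t)=x\big]$, which matches $w^{c,\pi}(t,x)$ of \eqref{yzw0} up to the elementary rescaling of the discount factor between absolute time $s$ and elapsed time $s-t$.

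The main obstacle is precisely the martingale justification, i.e. verifying that the stochastic-integral remainders have zero conditional expectation. Since \eqref{wealth} is linear with the \emph{random} diffusion coefficient $\pi(\cdot)\sigma X$ and $\pi$ is only assumed to lie in $L^2_{\mathcal{F}}(0,T;\mathbb{R})$, uniform moment bounds on $\sup_{t\le s\le T}|X(s)|$ are not automatic. I would handle this by localisation: introduce stopping times $\tau_n=\inf\{s\ge t:\ |X(s)|\ge n\ \text{or}\ \int_t^s\pi(u)^2\,du\ge n\}\wedge T$, establish each identity on $[t,\tau_n]$ where the integrand is bounded and the stochastic integral is a true martingale, and then let $n\to\infty$, using the polynomial growth of $Y,Z,W$ together with the standard $L^p$ estimates for linear SDEs (or, absent an a priori bound on $\pi$, an extra integrability hypothesis) to pass to the limit via dominated convergence / uniform integrability. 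Once this limiting step is secured, matching the terminal data as above closes the argument; everything else is the routine It\^o computation.
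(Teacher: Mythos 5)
Your proposal is correct and is exactly the standard Feynman--Kac verification (It\^o's formula applied to $e^{-\delta(s-t)}Y$, $e^{-2\delta(s-t)}Z$ and $W$ plus the accumulated-utility integral, drift cancellation via the PDEs, and a localisation argument to upgrade the local martingale to a true one) that the paper itself omits and delegates to the reference [Yang20]. Your side remark about the discount factor is also well taken: as written, \eqref{W1} with source $e^{-\rho t}U(c)$ and no $-\rho W$ term actually produces $W(t,x)=E[\int_t^T e^{-\rho s}U(c(s))\,ds\mid X(t)=x]=e^{-\rho t}w^{c,\pi}(t,x)$, so the mismatch with \eqref{yzw0} is an inconsistency in the paper's statement rather than a gap in your argument.
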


\begin{lemma}\label{lemma2}
If there exists a function $F=F(t,x)$ such that
\begin{equation}\label{F1}
\left\{
\begin{aligned}
&F_t=\inf_{c,\pi \in \mathcal{A}} \left\{-[(r+\pi(\mu-r))x+l-c](F_x-Q)-\frac{1}{2}\pi^2\sigma^2x^2(F_{xx}-U)+J\right\},\\
&F(T,x)=f^{c,\pi}(T,x,x,x^2,0),
\end{aligned}
\right.
\end{equation}
where $Q=f_x^{c^*,\pi^*}$,
\begin{equation}\label{e30}
\begin{aligned}
U=&f_{xx}^{c^*,\pi^*}+f_{yy}^{c^*,\pi^*}(F^{(1)}_{x})^2++f_{zz}^{c^*,\pi^*}(F^{(2)}_{x})^2+f_{ww}^{c^*,\pi^*}(F^{(3)})^2+2f_{xy}^{c^*,\pi^*}F^{(1)}_x+2f_{xz}^{c^*,\pi^*}F^{(2)}_x\\
&+2f_{xw}^{c^*,\pi^*}F^{(3)}_x+2f_{yz}^{c^*,\pi^*}F^{(1)}_xF^{(2)}_x+2f_{yw}^{c^*,\pi^*}F^{(1)}_xF^{(3)}_x+2f_{zw}^{c^*,\pi^*}F^{(2)}_xF^{(3)}_x
\end{aligned}
\end{equation}
and
\begin{equation}\label{e31}
J=f_t^{c^*,\pi^*}+f_y^{c^*,\pi^*}\delta F^{(1)}+2f_z^{c^*,\pi^*}\delta F^{(2)}-f_w^{c^*,\pi^*}e^{-\rho t}U(c(t)).
\end{equation}
with
$$F^{(1)}=y^{c^*,\pi^*}(t,x), \quad F^{(2)}=z^{c^*,\pi^*}(t,x), \quad F^{(3)}=w^{c^*,\pi^*}(t,x),$$
then
$$F(t,x)=V(t,x),$$
where $V$ is the optimal value function defined by \eqref{op1}.
\end{lemma}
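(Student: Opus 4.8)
The plan is to run an equilibrium verification argument of the type developed in \cite{Bjork09,Bjork12,Yang20}. First I would extract from \eqref{F1} the feedback pair $(c^*(t,x),\pi^*(t,x))$ attaining the pointwise infimum; this is the candidate equilibrium law, and it is exactly the $(c^*,\pi^*)$ appearing in \eqref{e30}--\eqref{e31} and in the coefficient $\psi(t,x)=\gamma/(x+K^{(c^*)}(t,x))$ of \eqref{ff}, which from here on is a fixed function of $(t,x)$. With $(c^*,\pi^*)$ fixed, Lemma 2.1 identifies $y^{c^*,\pi^*},z^{c^*,\pi^*},w^{c^*,\pi^*}$ with the solutions $Y,Z,W$ of \eqref{Y1}--\eqref{W1}, so by \eqref{op1}--\eqref{ff} the quantity to be matched is $V=Y-\tfrac{\psi}{2}(Z-Y^2)+\beta W$. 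The proof has two parts: first I show that $F$ coincides with this $V$ (using only that $(c^*,\pi^*)$ attains the infimum in \eqref{F1}), and then that $(c^*,\pi^*)$ is an equilibrium in the sense of Definition \ref{def1}, so that $V$ is genuinely the equilibrium value function.

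For the first part I would regard $V$ as the composite $f^{c^*,\pi^*}(t,x,Y(t,x),Z(t,x),W(t,x))$, compute $V_t$ and $V_{xx}$ by the chain rule, and eliminate $Y_t,Z_t,W_t$ via \eqref{Y1}--\eqref{W1}. The first-order parts recombine with $f_x^{c^*,\pi^*}=Q$; the discount terms $\delta Y$, $2\delta Z$ and the running reward $-e^{-\rho t}U(c^*)$ assemble into $J$; and the diffusion part of the composite produces precisely the cross terms $f_{yy}^{c^*,\pi^*}(F^{(1)}_x)^2$, $2f_{xy}^{c^*,\pi^*}F^{(1)}_x$, $\ldots$ collected in $U$ (note $f_{yy}^{c^*,\pi^*}=\psi$, so that this contribution is $\psi\,(y^{c^*,\pi^*}_x)^2$). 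A direct computation then shows that $V$ solves the frozen-control equation $V_t=-[(r+\pi^*(\mu-r))x+l-c^*](V_x-Q)-\tfrac12(\pi^*)^2\sigma^2x^2(V_{xx}-U)+J$ with $V(T,x)=x$. Since $(c^*,\pi^*)$ attains the infimum in \eqref{F1}, the given $F$ satisfies the same linear parabolic equation with the same terminal datum $F(T,x)=f^{c^*,\pi^*}(T,x,x,x^2,0)=x$, so uniqueness for that equation — equivalently the Feynman--Kac representation under \eqref{wealth} with $(c^*,\pi^*)$ frozen, which is legitimate because of the $L^2_{\mathcal F}$ admissibility — forces $F\equiv V$.

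For the second part I would fix $(t,x)$ and take the perturbed strategy $(\widetilde c_h,\widetilde\pi_h)$ of \eqref{pi0}, constant equal to $(c,\pi)$ on $[t,t+h)$ and equal to $(c^*,\pi^*)$ afterwards. Using the flow property of \eqref{wealth} and Lemma 2.1 I would write $y^{\widetilde c_h,\widetilde\pi_h}(t,x)=e^{-\delta h}E_{t,x}\big[y^{c^*,\pi^*}(t+h,X^{\widetilde c_h,\widetilde\pi_h}(t+h))\big]$, $z^{\widetilde c_h,\widetilde\pi_h}(t,x)=e^{-2\delta h}E_{t,x}\big[z^{c^*,\pi^*}(t+h,X^{\widetilde c_h,\widetilde\pi_h}(t+h))\big]$, and $w^{\widetilde c_h,\widetilde\pi_h}(t,x)=E_{t,x}\big[\int_t^{t+h}e^{-\rho(s-t)}U(c)\,ds\big]+e^{-\rho h}E_{t,x}\big[w^{c^*,\pi^*}(t+h,X^{\widetilde c_h,\widetilde\pi_h}(t+h))\big]$. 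Substituting into $f^{c^*,\pi^*}=(\cdot)-\tfrac{\psi}{2}((\cdot)-(\cdot)^2)+\beta(\cdot)$, expanding the discount factors to first order in $h$, and applying Dynkin's formula to $y^{c^*,\pi^*},z^{c^*,\pi^*},w^{c^*,\pi^*}$ on $[t,t+h)$ (replacing time derivatives via \eqref{Y1}--\eqref{W1}), I would read off the $O(h)$ coefficient of $\tfrac1h\big(f^{c^*,\pi^*}(t,x,y^{c^*,\pi^*},z^{c^*,\pi^*},w^{c^*,\pi^*})-f^{\widetilde c_h,\widetilde\pi_h}(t,x,y^{\widetilde c_h,\widetilde\pi_h},z^{\widetilde c_h,\widetilde\pi_h},w^{\widetilde c_h,\widetilde\pi_h})\big)$. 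After invoking $F=V$ and the frozen-control PDE from the first part, this limit equals $-[(r+\pi(\mu-r))x+l-c](F_x-Q)-\tfrac12\pi^2\sigma^2x^2(F_{xx}-U)+J-F_t$, which by \eqref{F1} is non-negative for every $(c,\pi)$ and is zero at $(c^*,\pi^*)$; hence the inequality of Definition \ref{def1} holds and $(c^*,\pi^*)$ is an equilibrium.

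The step I expect to be the main obstacle is the $O(h)$-expansion in the second part, because $f$ is quadratic in its $y$-slot: $\big(y^{\widetilde c_h,\widetilde\pi_h}\big)^2=e^{-2\delta h}\big(E_{t,x}[y^{c^*,\pi^*}(t+h,X(t+h))]\big)^2$ is the \emph{square of an expectation}, not the expectation of a square, so the naive tower property breaks down. The Jensen gap $E_{t,x}\big[(y^{c^*,\pi^*}(t+h,X(t+h)))^2\big]-\big(E_{t,x}[y^{c^*,\pi^*}(t+h,X(t+h))]\big)^2=\mathrm{Var}_{t,x}\big(y^{c^*,\pi^*}(t+h,X(t+h))\big)$ is $O(h)$ and, by Itô, equals $\pi^2\sigma^2x^2\,(y^{c^*,\pi^*}_x)^2\,h+o(h)$ — precisely the term $f_{yy}^{c^*,\pi^*}(F^{(1)}_x)^2\pi^2\sigma^2x^2$ entering through $U$; the analogous gaps for the mixed $y$--$z$ and $y$--$w$ products furnish the remaining cross terms of \eqref{e30}. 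Bookkeeping these corrections, and separately justifying the interchange of limit, expectation and differentiation (which is where the $\mathcal{C}^{1,2,2,2,2}$ regularity of $f$ and the square-integrability of admissible controls are used), is the technical core of the argument.
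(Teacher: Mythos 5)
The paper offers no proof of this lemma --- it explicitly defers to the verification argument in \cite{Yang20} (and ultimately \cite{Bjork09,Bjork12,Kronborg15}) --- and your proposal reconstructs exactly that argument: identify $V=f^{c^*,\pi^*}(t,x,Y,Z,W)$ with the composite of \eqref{ff} and the solutions of \eqref{Y1}--\eqref{W1}, show by the chain rule that it satisfies the frozen-control equation so that $F\equiv V$, and then verify the inequality of Definition \ref{def1} by a first-order expansion of the perturbed strategy \eqref{pi0}. You have also correctly isolated the one genuinely nontrivial point, namely that the quadratic dependence of $f$ on $y$ breaks the tower property and the resulting Jensen/variance gap is precisely what generates the correction terms $f_{yy}^{c^*,\pi^*}(F^{(1)}_x)^2$ etc.\ in $U$; this is the same mechanism as in the cited literature, so your proposal is correct and takes essentially the intended route.
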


\section{Determination of optimal strategy}

In this section, we present the optimal solutions to the optimal portfolio selection problem \eqref{maxu2} based on the results derived in the previous section, and some detailed discussions are provided to illustrate the behaviour of the optimal strategies.

\begin{lemma}\label{l33}
The optimal policy for the optimal value function \eqref{F1} can be solved as
\begin{equation}\label{pi1}
\pi^*=-\frac{\mu-r}{x\sigma^2}\frac{F^{(1)}_x+\psi F^{(1)}F^{(1)}_x-\frac{\psi}{2}F^{(2)}_x+\beta F^{(3)}_x}{F^{(1)}_{xx}+\psi F^{(1)}F^{(1)}_{xx}-\frac{\psi}{2}F^{(2)}_{xx}+\beta F^{(3)}_{xx}}
\end{equation}
and
\begin{equation}\label{cstar1}
c^*=[U']^{-1}\left(\frac{1}{\beta}e^{-\rho t}R(t,x)\right),
\end{equation}
where $[f]^{-1}(\cdot)$ is the inverse function of $f$ and
\begin{equation}
R(t,x)=F^{(1)}_x+\psi F^{(1)}F^{(1)}_x-\frac{\psi}{2}F^{(2)}_x+\beta F^{(3)}_x+\frac{\gamma}{2(x+K^{(c)})^2}(1+K_x^{c^*}(t,x))\left(F^{(2)}-(F^{(1)})^2\right).
\end{equation}
\end{lemma}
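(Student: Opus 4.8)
The plan is to substitute the postulated value function \eqref{ff} into the extended HJB equation \eqref{F1} of Lemma \ref{lemma2} and then carry out the pointwise minimization over $(c,\pi)$ by first-order conditions. The preliminary step is bookkeeping: from $f^{c^*,\pi^*}(t,x,y,z,w)=y-\frac{\psi(t,x)}{2}(z-y^2)+\beta w$ with $\psi(t,x)=\gamma/(x+K^{(c)}(t,x))$ one reads off $f_y=1+\psi y$, $f_z=-\psi/2$, $f_w=\beta$, $f_x=-\frac{\psi_x}{2}(z-y^2)$, $f_{yy}=\psi$, $f_{xy}=\psi_x y$, $f_{xz}=-\psi_x/2$, $f_{xx}=-\frac{\psi_{xx}}{2}(z-y^2)$, while $f_{zz}=f_{ww}=f_{xw}=f_{yz}=f_{yw}=f_{zw}=0$; evaluating these along the equilibrium, where $(F^{(1)},F^{(2)},F^{(3)})=(y^{c^*,\pi^*},z^{c^*,\pi^*},w^{c^*,\pi^*})$, turns $Q$ in \eqref{F1}, $U$ in \eqref{e30}, and $J$ in \eqref{e31} into explicit expressions in $\psi$, its derivatives, and the $F^{(i)}$ together with their $x$-derivatives.

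The second step is the algebraic reduction behind the quotient in \eqref{pi1}. Since $F(t,x)=f^{c^*,\pi^*}(t,x,F^{(1)}(t,x),F^{(2)}(t,x),F^{(3)}(t,x))$, the chain rule gives $F_x=f_x+f_yF^{(1)}_x+f_zF^{(2)}_x+f_wF^{(3)}_x$, and differentiating once more and comparing with \eqref{e30} shows that all the second-order-in-$f$ terms in $F_{xx}$ collect into exactly $U$; hence
\[F_x-Q=f_yF^{(1)}_x+f_zF^{(2)}_x+f_wF^{(3)}_x,\qquad F_{xx}-U=f_yF^{(1)}_{xx}+f_zF^{(2)}_{xx}+f_wF^{(3)}_{xx}.\]
Substituting $f_y=1+\psi F^{(1)}$, $f_z=-\psi/2$, $f_w=\beta$ (using $y=F^{(1)}$ along the equilibrium) identifies these with the numerator and denominator of \eqref{pi1}. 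The bracket on the right-hand side of \eqref{F1} is then quadratic in $\pi$, with $\pi^2$-coefficient $-\frac12\sigma^2x^2(F_{xx}-U)$, so that $\partial/\partial\pi=0$ immediately yields $\pi^*=-\frac{\mu-r}{x\sigma^2}\,\frac{F_x-Q}{F_{xx}-U}$, which is \eqref{pi1}; I would also note that the relevant second-order condition is simply that $F_{xx}-U$ carry the sign making $\pi^*$ a genuine minimizer.

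The minimization in $c$ is the delicate point, and I expect it to be the main obstacle, because the risk-aversion weight $\psi=\gamma/(x+K^{(c)})$ depends on the consumption policy through $K^{(c)}$ itself: a local perturbation of $c$ on a short interval $[t,t+h)$ as in \eqref{pi0} moves $\psi(t,x)$ directly and not only through the state. To control this term I would combine an It\^{o} expansion in $h$ of $K^{(\widetilde{c}_h)}(t,x)$ with the Feynman--Kac equation satisfied by $K^{c^*}$, namely $K^{c^*}_t+[(r+\pi^*(\mu-r))x+l-c^*]K^{c^*}_x+\frac12(\pi^*)^2\sigma^2x^2K^{c^*}_{xx}-rK^{c^*}+l-c^*=0$ with $K^{c^*}(T,x)=0$; once the terms supplied by that equation cancel, the rate of change of $K^{(c)}(t,x)$ with respect to the consumption rate applied on $[t,t+h)$ is $-(1+K^{c^*}_x(t,x))$. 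Propagating this through the $\psi$-dependence of the minimand, which enters via the term $-\frac{\psi}{2}(z-y^2)$ of $f$, produces the extra contribution $\frac{\gamma}{2(x+K^{(c)})^2}(1+K^{c^*}_x)(F^{(2)}-(F^{(1)})^2)$ --- here $F^{(2)}-(F^{(1)})^2$ is the equilibrium terminal variance --- so that the first-order condition in $c$ balances the marginal-utility term coming from $-f_w e^{-\rho t}U(c)$ in $J$ against the sum $R(t,x)$ of this contribution and the drift contribution $F_x-Q$; inverting the strictly increasing $U'$ then gives \eqref{cstar1}. Getting the sign and the coefficient of that $\psi$-through-$K^{(c)}$ correction right is, in my view, the crux of the argument; what then remains is only to record the standing assumptions --- $U'$ invertible on the relevant range, $c^*\geq 0$, and $(c^*,\pi^*)\in\mathcal{A}$ --- under which the preceding verification lemmas yield $F=V$.
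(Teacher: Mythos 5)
Your proposal is correct and follows essentially the same route as the paper: substitute the quadratic form of $f$ into the extended HJB system to reduce $F_x-Q$ and $F_{xx}-U$ to the combinations appearing in \eqref{pi1}, take the first-order condition in $\pi$, and handle the $c$-dependence of $\psi=\gamma/(x+K^{(c)})$ via the Feynman--Kac PDE for $K^{(c)}$, which is exactly how the paper produces the $(1+K_x^{c^*})$ correction term in $R(t,x)$. The only cosmetic difference is that you frame that correction as a local-perturbation rate of $K^{(\widetilde c_h)}$, whereas the paper inserts the Feynman--Kac expression for $\psi_t$ into $J$ (via the $f_t$ term) and then differentiates in $c$; both yield the same term with the same coefficient.
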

\begin{proof}
A candidate strategy for the optimal value function \eqref{F1} can be derived by simply differentiating \eqref{F1} with respect to $\pi$ and $c$, respectively. Therefore, the optimal strategy $\pi^*$  should satisfy
\begin{equation}
\frac {\partial}{\partial \pi}\left(-\pi(\mu-r)x(F_x-Q)-\frac{1}{2}\pi^2\sigma^2x^2(F_{xx}-U)\right)=0.
\end{equation}
A further simplification then yields
\begin{equation}\label{o1}
\pi^*=-\frac{\mu-r}{x\sigma^2}\frac{F_x-Q}{F_{xx}-U}.
\end{equation}
Recall the corresponding objective form
\begin{equation}\label{ff2}
f(t,x,y,z,w)=y-\frac{\psi(t,x)}{2}(z-y^2)+\beta w,
\end{equation}
where $\psi(t,x)=\frac{\gamma}{x+K^{(c)}(t,x)}$. Substituting \eqref{ff2}
into \eqref{e30} and \eqref{e31} gives
\begin{equation}\label{z}
\begin{aligned}
F_x-Q=&F^{(1)}_x+\psi F^{(1)}F^{(1)}_x-\frac{\psi}{2}F^{(2)}_x+\beta F^{(3)}_x,\\
F_{xx}-U=&F^{(1)}_{xx}+\psi F^{(1)}F^{(1)}_{xx}-\frac{\psi}{2}F^{(2)}_{xx}+\beta F^{(3)}_{xx}.\\
\end{aligned}
\end{equation}
Similarly, we can also obtain
\begin{equation}\label{JAA}
J=\delta F^{(1)}-(\psi\delta+\frac{\psi_t}{2})\left(F^{(2)}-(F^{(1)})^2\right)-\beta e^{-\rho t}U(c).
\end{equation}
By characterizing as the solution to a Feynman-Kac PDE, we can obtain
\begin{equation}\label{psi1}
\psi_t=-\frac{\gamma}{(x+K^{(c)})^2}\left(rK^{(c)}-l+c-(rx+l-c)K^{(c)}_x-\frac{1}{2}\pi^2\sigma^2x^2K^{(c)}_{xx}\right).
\end{equation}
Inserting \eqref{psi1} into \eqref{JAA}, we have the new form of $J$ as follows
\begin{equation}\label{JAB}
\begin{aligned}
J=&\delta F^{(1)}-\frac{\gamma \delta}{x+K^{(c)}}\left(F^{(2)}-(F^{(1)})^2\right)
+\frac{\gamma}{2(x+K^{(c)})^2}\bigg(rK^{(c)}-l+c-(rx+l-c)K^{(c)}_x-\frac{1}{2}\pi^2\sigma^2x^2K^{(c)}_{xx}\bigg)\\
&
\times\left(F^{(2)}-(F^{(1)})^2\right)-\beta e^{-\rho t}U(c).\\
\end{aligned}
\end{equation}
By substituting \eqref{JAB} into optimal value function \eqref{F1} and differentiating with respect to $c$, we then arrive at the optimal consumption strategy $c^*$ defined as \eqref{cstar1}. This completes the proof.
\end{proof}

To obtain a more explicit form of the optimal policy, we search for solutions where $F^{(1)}$, $F^{(2)}$ and $F^{(3)}$  are tractable. We report the new derived forms below for the optimal solutions given in Lemma \ref{l33}, and we also verify the new solutions are well-defined.

\begin{theorem}
The optimal investment and consumption strategies for model \eqref{maxu2} are
\begin{equation}\label{pi2}
\pi^*(t)x=\frac{\mu-r}{\sigma^2\gamma f(t)}\left(a(t)+\gamma(a^2(t)-f(t))\right)(x+K^{(c^*)}(t))
\end{equation}
provided that
\begin{equation}\label{cond1}
\frac{\gamma f(t)}{x+K^{(c^*)}(t)}>0,
\end{equation}
and
\begin{equation}\label{cstar2}
c^*(t)=[U']^{-1}\left(a(t)+\frac{\gamma}{2}(a^2(t)-f(t))\right)
\end{equation}
respectively, and the optimal objective value is
\begin{equation}
F(t,x)=a(t)\left(x+K^{c^*}(t)\right)-\frac{\gamma}{2}\left(f(t)-a^2(t)\right)\left(x+K^{c^*}(t)\right)+\beta\int_{t}^{T}e^{-\rho s}U(c^*(s))ds.
\end{equation}
where $a(t)$ and $f(t)$ are given by
\begin{equation}
\frac{da(t)}{dt}=-\left((r-\delta)+\frac{\mu-r}{\sigma^2\gamma f(t)}(a(t)+\gamma(a^2(t)-f(t)))\right)a(t)
\end{equation}
and
\begin{equation}
\frac{df(t)}{dt}=-2\left((r-\delta)+\frac{\mu-r}{\sigma^2\gamma f(t)}\left(a(t)+\gamma(a^2(t)-f(t))\right)\right)+\frac{\mu-r}{\sigma^2\gamma f(t)}\left(a(t)+\gamma(a^2(t)-f(t))\right)f(t)
\end{equation}
with initial conditions $a(T)=f(T)=1$.
\end{theorem}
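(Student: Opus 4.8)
The plan is to collapse the PDE characterisation \eqref{Y1}--\eqref{W1}, \eqref{F1} into a finite system of ODEs via an ansatz, read the optimal controls off from Lemma~\ref{l33}, and close with the verification Lemma~\ref{lemma2}. Because \eqref{cstar1} writes the optimal consumption as an inverse marginal utility of a quantity I expect to become $x$-independent, I would begin from the hypothesis that the equilibrium consumption $c^*$ is a deterministic function of time; under it
\[
K^{(c^*)}(t,x)=\int_t^T e^{-r(s-t)}\bigl(l(s)-c^*(s)\bigr)\,ds=:k(t)
\]
is free of $x$, with $k'(t)=rk(t)-l(t)+c^*(t)$, $k(T)=0$, and $\psi(t,x)=\gamma/(x+k(t))$. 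Guided by the stated value function I then posit, for $y^{c^*,\pi^*},z^{c^*,\pi^*},w^{c^*,\pi^*}$ respectively,
\[
F^{(1)}=a(t)\bigl(x+k(t)\bigr),\qquad F^{(2)}=f(t)\bigl(x+k(t)\bigr)^{2},\qquad F^{(3)}=\int_t^T e^{-\rho s}U\bigl(c^*(s)\bigr)\,ds,
\]
so that the terminal conditions in \eqref{Y1}--\eqref{W1} together with $k(T)=0$ force $a(T)=f(T)=1$.

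\textbf{Reading off the controls.} Feeding this ansatz into the formulas \eqref{z}, every surviving term carries a factor $x+k(t)$ that cancels against the $\psi=\gamma/(x+k(t))$ in front of it, leaving the $x$-free expressions $F_x-Q=a(t)+\gamma(a^2(t)-f(t))$ and $F_{xx}-U=-\gamma f(t)/(x+k(t))$. Substituting into \eqref{pi1} gives exactly \eqref{pi2}, and \eqref{cond1} is nothing but the strict-concavity condition $F_{xx}-U<0$ ensuring that the critical point \eqref{pi1} is the pointwise minimiser in \eqref{F1}. Computing $R(t,x)$ of Lemma~\ref{l33} under the ansatz with $K^{(c^*)}_x\equiv 0$ reduces it to a function of $t$ alone, so \eqref{cstar1} produces a deterministic consumption rule of the form \eqref{cstar2}, consistent with the opening hypothesis. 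Assembling $F=F^{(1)}-\tfrac{\psi}{2}\bigl(F^{(2)}-(F^{(1)})^{2}\bigr)+\beta F^{(3)}$ then reproduces the asserted optimal objective value.

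\textbf{Deriving the ODEs and verifying.} The structural reason this works is that $\pi^*(t)x$ is proportional to $x+k(t)$, so $X^{c^*,\pi^*}+k$ has drift linear in, and diffusion proportional to, itself, which makes the affine/quadratic forms of $F^{(1)},F^{(2)}$ self-reproducing. Plugging $F^{(1)}=a(x+k)$, $F^{(2)}=f(x+k)^{2}$ into \eqref{Y1} and \eqref{Z1}, using $k'=rk-l+c^*$ to cancel the income--consumption terms, and dividing by the non-vanishing powers of $x+k(t)$, I am left with a closed coupled system of ODEs in $t$ for $a$ and $f$ with terminal data $a(T)=f(T)=1$ --- the system displayed in the theorem; that $F^{(3)}$ solves \eqref{W1} is immediate since it is $x$-free with $\partial_t F^{(3)}=-e^{-\rho t}U(c^*)$. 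Finally, substituting $\psi_t$ from \eqref{psi1} (with $K^{(c^*)}_x=K^{(c^*)}_{xx}=0$) and the optimal controls into \eqref{F1}, one checks that $F$ satisfies \eqref{F1} identically in $x$, whereupon Lemma~\ref{lemma2} gives $F=V$ and $(c^*,\pi^*)$ is the equilibrium strategy.

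\textbf{Well-definedness and the main obstacle.} It remains to secure: existence of a solution of the $(a,f)$-system on $[0,T]$ (local existence near $T$ is standard, and since $f(T)=1>0$ and $x+k(t)>0$ for an economically meaningful problem, $\gamma f(t)/(x+k(t))$ stays positive --- so \eqref{cond1} holds and the right-hand sides, which carry $f$ in a denominator, stay regular); well-posedness of $[U']^{-1}$ in \eqref{cstar2} with $c^*$ admissible, which holds for strictly increasing, strictly concave $U$ whose derivative maps onto a range containing $\{a(t)+\tfrac{\gamma}{2}(a^2(t)-f(t)):t\in[0,T]\}$; and the self-consistency of the loop $c^*\mapsto k\mapsto(a,f)\mapsto c^*$, which is what justifies assuming $c^*$ deterministic at the outset. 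I expect the main obstacle to be exactly this last verification: confirming that the ansatz closes (no extra $x$-dependence appears, which hinges on the cancellation that made $F_x-Q$ constant in $x$) and that, once $\psi_t$ and the optimal controls are inserted, \eqref{F1} holds as an identity in $x$ rather than merely at one point; the accompanying computations are routine but lengthy.
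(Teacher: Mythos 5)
Your proposal is correct and follows essentially the same route as the paper: a deterministic-consumption hypothesis, an ansatz for $F^{(1)},F^{(2)},F^{(3)}$ in powers of $x+K^{(c^*)}(t)$, substitution into the formulas of Lemma~\ref{l33} to read off \eqref{pi2} and \eqref{cstar2}, and substitution into \eqref{Y1}--\eqref{W1} to obtain the ODE system. The only difference is that you posit the reduced ansatz directly, whereas the paper starts with extra affine terms $b,g,h,p,q$ and shows a posteriori that all but $q$ vanish; your explicit appeal to Lemma~\ref{lemma2} and the well-posedness discussion correspond to material the paper defers to its remarks.
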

\begin{proof}
To obtain an explicit solution for this optimal portfolio selection problem, we assume that $F^{(1)}$, $F^{(2)}$ and $F^{(3)}$ can be written in the following form:
\begin{equation}\label{aa1}
\begin{aligned}
F^{(1)}(t,x)&=a(t)(x+K^{c^*}(t))+b(t), \\
F^{(2)}(t,x)&=f(t)(x+K^{c^*}(t))^2+g(t)(x+K^{c^*}(t))+h(t),\\
F^{(3)}(t,x)&=p(t)(x+K^{c^*}(t))+q(t),
\end{aligned}
\end{equation}
where $a$, $b$, $f$, $g$, $h$, $p$ and $q$ are deterministic functions of time. The candidate for the optimal consumption strategy $c$ is assumed to be independent of wealth, which implies that
\begin{equation}\label{KK}
K^{c^*}(t)=\int_{t}^{T}e^{-r(s-t)}(l(s)-c^*(s))ds.
\end{equation}
We also assume that
\begin{equation}\label{assume}
a(t)b(t)=\frac{g(t)}{2},\quad h(t)=b^2(t).
\end{equation}
Substituting \eqref{aa1} into \eqref{pi1} and \eqref{cstar1} can yield the new forms \eqref{pi2} and \eqref{cstar2}. Now insert \eqref{pi2} and \eqref{cstar2} into \eqref{Y1} and include the terminal conditions to get
\begin{equation}
\frac{da(t)}{dt}=-\left((r-\delta)+\frac{\mu-r}{\sigma^2\gamma f(t)}(a(t)+\gamma(a^2(t)-f(t)))\right)a(t)
\end{equation}
and
\begin{equation}
\frac{db(t)}{dt}=\delta b(t)
\end{equation}
with terminal conditions $a(T)=1$ and $b(T)=0$, respectively.
In the same way, substituting \eqref{pi2} and \eqref{cstar2} into \eqref{Z1} yields
\begin{equation}
\begin{aligned}
\frac{df(t)}{dt}&=-2\left((r-\delta)+\frac{\mu-r}{\sigma^2\gamma f(t)}(a(t)+\gamma(a^2(t)-f(t)))\right)a(t)+\frac{\mu-r}{\sigma^2\gamma f(t)}(a(t)+\gamma(a^2(t)-f(t)))f(t)\\
\frac{dg(t)}{dt}&=-\left(r+\frac{(\mu-r)^2}{\sigma^2\gamma f(t)}(a(t)+\gamma(a^2(t)-f(t)))\right)g(t)+2\rho g(t),
\end{aligned}
\end{equation}
and
\begin{equation}
\frac{dh(t)}{dt}=2\delta h(t),
\end{equation}
with terminal conditions $f(T)=1$, $g(T)=h(T)=0$. By inserting \eqref{pi2} and \eqref{cstar2} into \eqref{W1}, we then have
\begin{equation}
\frac{dp(t)}{dt}=-\left((r-\delta)+\frac{\mu-r}{\sigma^2\gamma f(t)}(a(t)+\gamma(a^2(t)-f(t)))\right)p(t)
\end{equation}
and
\begin{equation}
\frac{dq(t)}{dt}=e^{-\rho t} U(c^*(t))dt
\end{equation}
with terminal conditions $p(T)=q(T)=0$.

After simple calculations, we further have $b(t)=g(t)=h(t)=p(t)=0$, which guarantees the assumptions \eqref{assume}. Besides, $q(t)=\int_{t}^{T}e^{-\rho s}U(c^*(s))ds$.

In addition, the optimal value function $F$ can be also derived as
\begin{equation}
\begin{aligned}
F(t,x)&=F^{(1)}-\frac{\gamma}{2(x+K^{(c^*)}(t))}\left(F^{(2)}-(F^{(1)})^2\right)+\beta F^{(3)}\\
&=a(t)\left(x+K^{c^*}(t)\right)+b(t)-\frac{\gamma}{2}\left(f(t)-a^2(t)\right)\left(x+K^{c^*}(t)\right)+\beta\left(p(t)(x+K^{c^*}(t))+q(t)\right)\\
&=a(t)\left(x+K^{c^*}(t)\right)-\frac{\gamma}{2}\left(f(t)-a^2(t)\right)\left(x+K^{c^*}(t)\right)+\beta\int_{t}^{T}e^{-\rho s}U(c^*(s))ds.
\end{aligned}
\end{equation}
This completes the proof.
\end{proof}

\begin{remark}
It follows \eqref{wealth} and \eqref{KK} that
\begin{equation}
\begin{aligned}
d\left(X^{c^*,\pi^*}(t)+K^{(c^*)}(t)\right)=&\left(r+\frac{(\mu-r)^2}{\sigma^2\gamma f(t)}(a(t)+\gamma(a^2(t)-f(t)))\right)\left(X^{c^*,\pi^*}(t)+K^{(c^*)}(t)\right)dt\\
&+\frac{\mu-r}{\sigma^2\gamma f(t)}\left(a(t)+\gamma(a^2(t)-f(t)))\right)\left(X^{c^*,\pi^*}(t)+K^{(c^*)}(t)\right)dB(t).
\end{aligned}
\end{equation}
Therefore,
\begin{equation}\label{Xt}
\begin{aligned}
&X^{c^*,\pi^*}(t)+K^{(c^*)}(t)\\
=&(x_0+K^{(c^*)}(0))\exp\bigg[\int_0^t \left(r+\frac{(\mu-r)^2}{\sigma^2\gamma f(s)}(a(s)+\gamma(a^2(s)-f(s)))-\frac{1}{2}\frac{(\mu-r)^2}{\sigma^2\gamma f(s)}(a(s)+\gamma(a^2(s)-f(s)))^2\right)ds\\
&+\int_0^t\frac{\mu-r}{\sigma^2\gamma f(s)}\left(a(s)+\gamma(a^2(s)-f(s)))\right)dB(s)\bigg].
\end{aligned}
\end{equation}
Since the initial condition ensures $x_0+K^{(c^*)}(0)>0$ and $f$ is proved to be strictly positive in \eqref{e33} below, we conclude the condition \eqref{cond1} for the optimal investment strategy is fulfilled.
\end{remark}

\begin{remark}
The system  composed of PDEs \eqref{Y1}, \eqref{Z1} and \eqref{W1} has a unique global solution. In fact, by replacing the integral interval to $[t,T]$ and taking conditional expectation at $t$ in \eqref{Xt}, we have
\begin{equation}\label{e11}
E\left[\left.X^{(c^*,\pi^*)}(T)\right|X(t)=x\right]=(x+K^{c^*}(t))\exp\left(\int_{t}^{T}[r+(\mu-r)\tilde{\pi}^*(s)]ds\right)
\end{equation}
and
\begin{equation}\label{e12}
E\left[\left.(X^{(c^*,\pi^*)}(T))^2\right|X(t)=x\right]=(x+K^{c^*}(t))^2\exp\left(2\int_{t}^{T}[r+(\mu-r)\tilde{\pi}^*(s)+\frac{1}{2}\sigma^2(\tilde{\pi}^*(s))^2]ds\right),
\end{equation}
where
\begin{equation}\label{pistar}
\tilde{\pi}^*(s)=\frac{\mu-r}{\sigma^2\gamma f(t)}\left(a(t)+\gamma(a^2(t)-f(t))\right).
\end{equation}
Comparing \eqref{aa1} with \eqref{e11} and \eqref{e12} yields
\begin{equation}
\begin{aligned}
&a(t)(x+K^{c^*}(t))+b(t)
=E\left[\left.e^{-\delta(T-t)}X^{(c^*,\pi^*)}(T)\right|X(t)=x\right]\\
=&(x+K^{c^*}(t))\exp\left(\int_{t}^{T}[(r-\delta)+(\mu-r)\tilde{\pi}^*(s)]ds\right)
\end{aligned}
\end{equation}
and
\begin{equation}
\begin{aligned}
&f(t)(x+K^{c^*}(t))^2+g(t)(x+K^{c^*}(t))+h(t)
=E\left[\left.(e^{-\delta(T-t)}X^{(c^*,\pi^*)}(T))^2\right|X(t)=x\right]\\
=&(x+K^{c^*}(t))^2\exp\left(2\int_{t}^{T}[(r-\delta)+(\mu-r)\tilde{\pi}^*(s)+\frac{1}{2}\sigma^2(\tilde{\pi}^*(s))^2]ds\right).
\end{aligned}
\end{equation}
Collecting terms we obtain
\begin{equation}\label{e33}
\begin{aligned}
a(t)&=\exp\left(\int_{t}^{T}[(r-\delta)+(\mu-r)\tilde{\pi}^*(s)]ds\right),\\
f(t)&=\exp\left(2\int_{t}^{T}[(r-\delta)+(\mu-r)\tilde{\pi}^*(s)+\frac{1}{2}\sigma^2(\tilde{\pi}^*(s))^2]ds\right),
\end{aligned}
\end{equation}
and $b(t)=g(t)=h(t)=0$.
Substituting \eqref{e33} into \eqref{pistar} leads to
\begin{equation}\label{inter}
\tilde{\pi}^*(t)=\frac{\mu-r}{\sigma^2\gamma}\left(e^{-\int_{t}^{T}[(r-\delta)+(\mu-r)\tilde{\pi}^*(s)+\sigma^2(\tilde{\pi}^*(s))^2]ds}+\gamma e^{-\int_{t}^{T}\sigma^2(\tilde{\pi}^*(s))^2]ds}-\gamma\right).
\end{equation}
By designing the algorithm as $\tilde{\pi}_0(t)=1$
and
$$\tilde{\pi}_{n+1}(t)=\frac{\mu-r}{\sigma^2\gamma}\left(e^{-\int_{t}^{T}[(r-\delta)+(\mu-r)\tilde{\pi}_n(s)+\sigma^2(\tilde{\pi}_n(s))^2]ds}+\gamma e^{-\int_{t}^{T}\sigma^2(\tilde{\pi}_n(s))^2]ds}-\gamma\right)$$ for $n\geq 1$ on $[0,T]$,
we can prove that the sequence $\{\tilde{\pi}_n\}$ converges to the solution $\tilde{\pi}^*$, which verifies the uniqueness of the optimal investment strategy.
\end{remark}

\section{Concluding remarks}

In this paper, we introduced the time and state dependent risk aversion into the mean-variance-utility portfolio selection problem and a new portfolio selection model embraces is proposed. We solved the model under a game theoretic framework and analytically derived the continuous equilibrium investment (consumption) policy.  The results perform economically reasonable implication that optimal investment strategy heavily depends on the investor's current wealth and future income-consumption balance. In addition, the continuous optimally consumption process shows high dependence on the investor's consumption preference.

\end{document}